\newcommand{\bbE}{\mathbb{E}}
\newcommand{\bbR}{\mathbb{R}}
\newcommand{\bfQ}{\mathbf{Q}}
\newcommand{\bbRnm}{\mathbb{R}^{n\times m}}
\newcommand{\calA}{\mathcal{A}}
\newcommand{\calO}{\mathcal{O}}
\newcommand{\calS}{\mathcal{S}}
\newcommand{\calT}{\mathcal{T}}
\newcommand{\calQ}{\mathcal{Q}}
\newcommand{\calM}{\mathcal{M}}
\newcommand{\calN}{\mathcal{N}}
\newcommand{\betahat}{\hat{\beta}}
\newcommand{\sumij}{\sum_{i=1}^n\sum_{j=1}^m}
\newcommand{\pialpha}{{\pi^\alpha}}
\DeclareMathOperator*{\argmax}{\arg\!\max}
\title{Multi Agent Reinforcement Learning \\for Sequential Satellite Assignment Problems}
\author {
    Joshua Holder\textsuperscript{\rm 1}, 
    Natasha Jaques\textsuperscript{\rm 2}, 
    Mehran Mesbahi\textsuperscript{\rm 1}
}
\begin{document}

\maketitle

\begin{abstract}
Assignment problems are a classic combinatorial optimization problem in which a group of agents must be assigned to a group of tasks such that maximum utility is achieved while satisfying assignment constraints. Given the utility of each agent completing each task, polynomial-time algorithms exist to solve a single assignment problem in its simplest form. However, in many modern-day applications such as satellite constellations, power grids, and mobile robot scheduling, assignment problems unfold over time, with the utility for a given assignment depending heavily on the state of the system. We apply multi-agent reinforcement learning to this problem, learning the value of assignments by bootstrapping from a known polynomial-time greedy solver and then learning from further experience. We then choose assignments using a distributed optimal assignment mechanism rather than by selecting them directly. We demonstrate that this algorithm is theoretically justified and avoids pitfalls experienced by other RL algorithms in this setting. Finally, we show that our algorithm significantly outperforms other methods in the literature, even while scaling to realistic scenarios with hundreds of agents and tasks.
\end{abstract}

%
\begin{links}
    \link{Code}{https://github.com/Rainlabuw/rl-enabled-distributed-assignment}
\end{links}

\section{Introduction}

Large-scale distributed systems like the power grid, transportation networks like Uber and Lyft, and satellite internet constellations are increasingly integrated in and critical to every aspect of our day-to-day lives, and will only become more so as time goes on. We can model these systems as a large group of agents working together to achieve broader goals - individual batteries and power plants working to satisfy grid-wide demand \cite{giovanelli2019power}, requested rides being distributed between cars \cite{qin2022reinforcement}, or satellites working together to provide internet across the Earth \cite{lin2022dynamicbeam}. 

In order to operate these systems efficiently, it is often necessarily to solve optimization problems at a massive scale. Given $n$ agents and $m$ tasks, one of the most natural optimization questions to ask is ``How can agents be optimally assigned to tasks?"

While we will see later that the simplest version of this question admits efficient solutions, most realistic problems are more complex. Specifically, in real systems, assignments must be made repeatedly rather than at a single instant. Furthermore, these problems are often state dependent - when an assignment is made, the state of the system changes, which affects the value of future assignments (i.e. a satellite has to change its orientation to complete a task). These more complex problems are often NP-hard \cite{gerkey2004formal} and thus difficult to approach with classical methods.

The temporal nature of this problem suggests that sequential decision-making techniques like reinforcement learning (RL) may be an attractive solution. However, as the number of agents in the environment grows, so too does the complexity of solving the problem with a centralized algorithm \citep{marl-book}. For this reason, we look to multi-agent reinforcement learning (MARL) to enable our solution to scale up to the massive problem size required in realistic problem domains. Naive application of MARL is difficult for several reasons; rewards must be specified such that cooperation on the global objective is guaranteed, and actions taken by individual MARL agents must attempt to satisfy the constraint that each agent is assigned to a unique task (i.e. to avoid conflicting assignments).

\begin{figure*}
    \centering
    \includegraphics[width = 1.8\columnwidth]{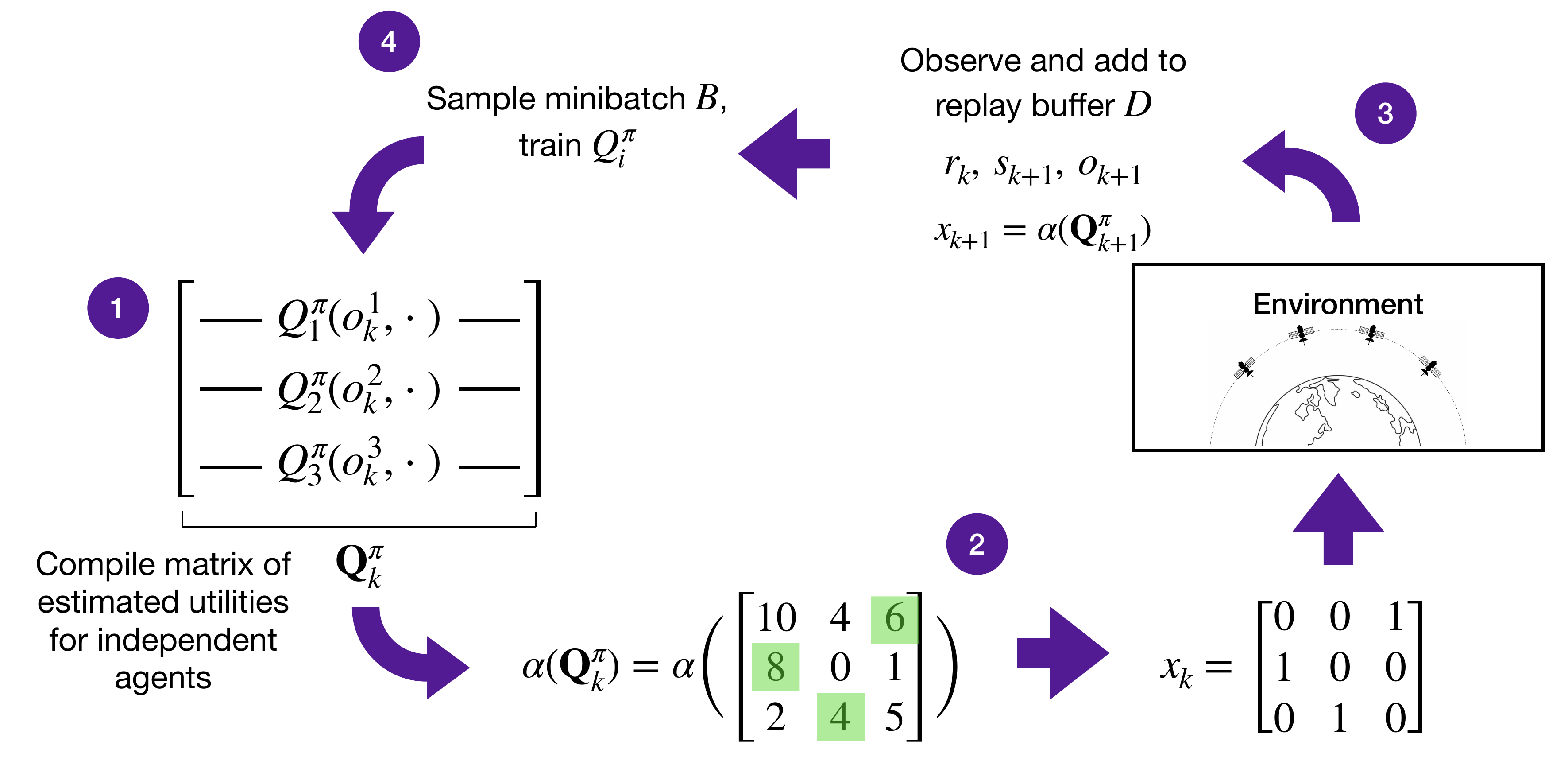}
    \caption{Architecture of REDA. 1) Calculate independent estimates of future utility for each agent, combine into a matrix. 2) Select joint assignment $x_k=\alpha(\bfQ_k^\pi)$ which maximizes social utility, not utility for any given agent. 3) Execute $x_k$ in environment and observe results. 4) Train agents' independent value estimates based on minibatch from replay buffer.}
    \label{fig:architecture}
\end{figure*}

In this work, we present a novel, theoretically justified algorithm that  addresses both of these challenges. Rather than having agents learn to assign themselves to tasks directly, we have each agent learn the expected value of an assignment, and use these learned values as the input to an optimal distributed task assignment mechanism. This allows agents to execute joint assignments that satisfy assignment constraints and avoid selfishness, while learning a joint policy which is near-optimal on the level of the entire system. This architecture is depicted in Figure \ref{fig:architecture}.

While prior work has used MARL in a similar fashion to address assignment problems, this work often focused on variants of the problem specific to ride-sharing \cite{shah2020neural, azagirre2024better}. 

Throughout this work, we will take a far more general approach, and focus instead on satellite internet constellations as a novel motivating example (i.e. agents are satellites, and tasks are regions on Earth to provide internet to). This is a compelling application for a few reasons. First, satellite assignment problems are uniquely complex in that orbital mechanics dictates that a satellite cannot accomplish the same task indefinitely, and thus that frequent transitions between tasks must be considered. Second, these systems have exploded in size in just a few years to contain thousands of agents, with few existing algorithms which can provide the high degree of autonomy, efficiency, and resilience required in this environment. Finally, these systems are incredibly expensive, with the marginal cost of a satellite approaching $\$500${\small,}$000$. If more efficient planning enables reducing the size of the constellation by even a few satellites out of thousands, it could save tens of millions of dollars.

As such, we apply our algorithm to a highly realistic satellite constellation management scenario, complete with high-fidelity orbital mechanics and hundreds of satellites and tasks, a scale which is extremely uncommon among non-heuristic approaches in the literature \cite{9115252}. Despite this, we find that our approach outperforms other state-of-the-art approaches by $20-50\%$ across both multi-agent reinforcement learning (COMA \cite{foerster2018counterfactual}, IQL \cite{matignon2012independent}, IPPO \cite{de2020independent}) and classical optimization (HAAL, \cite{holder2024haal}).

In summary, the contributions of this paper are:
\begin{itemize}
 \item A MARL approach for assignment problems which seamlessly integrates existing greedy planners into MARL, but improves upon their solution for better long-term planning.
 \item Novel insight into the workings of the method through simple experiments, direct comparison with state-of-the-art RL methods, and theoretical analysis that provides intuition on global convergence properties.
 \item Empirical results on a real satellite assignment problem that show a vastly improved ability to manage long-term resource constraints even when planning in complex environments with hundreds of satellites and tasks.
\end{itemize}

\section{Technical Preliminaries}
Mathematically, we can formulate the simplest version of an assignment problem with $n$ agents and $m$ tasks as:
\begin{equation}\label{eqn:standard-assignment-prob}
    \alpha(\beta) = \underset{x\in X}{\argmax} \ \sumij \beta_{ij} x_{ij}
\end{equation}
where: 
\begin{itemize}
    \item $\beta\in\bbR^{n \times m}$ is the \textit{benefit matrix}, where $\beta_{ij}$ corresponds to the utility of agent $i$ completing task $j$.
    \item $x\in X \subset \{0,1\}^{n \times m}$ is the \textit{assignment matrix}, where $x_{ij}=1$ if agent $i$ is assigned to task $j$, and $x_{ij}=0$ otherwise.
    \item $X := \{x \in \{0,1\}^{n\times m} \ | \ \sum_{j=1}^m x_{ij} = 1 \ \forall i, \ \sum_{i=1}^n x_{ij} \leq 1 \ \forall j\}$ is the \textit{set of valid assignments}. This corresponds to the set of assignment matrices such that each agent completes $1$ task, and each task is completed by at most $1$ agent.
\end{itemize}

When benefits $\beta$ are given, this is a well-studied problem for which a solution can be easily obtained in polynomial time (i.e. with a single Python command) \cite{kuhn1955hungarian}. As such, we denote the solution to Equation \ref{eqn:standard-assignment-prob} as a function \newline $\alpha : \bbRnm \to X$.

Consider the sequential assignment problem (SAP), a more complex case where assignments need to be made at several time steps, and the assignment benefits $\hat{\beta}$ depend on some state $s \in S$ that evolves according to a transition function $\calT: S \times X \to S$.
\begin{equation}\label{eqn:sap_formulation}
    \underset{\pi}{\max} \ \bbE^\pi \bigg[ \sum_{k=1}^T \sumij \gamma^{k-1} [\hat{\beta}(s_k)]_{ij} [x_k]_{ij} \bigg]
\end{equation}
where $x_k$ and $s_k$ denote the assignment and state at time step $k$, $s_1\sim S_0$, $\gamma$ is the discount factor, and $\bbE^\pi$ denotes that states evolve according to the transition dynamics $s_{k+1}\sim\calT(s_k, x_k)$ and that assignments are chosen with respect to the policy $x_k \sim \pi(s_k)$.

The clear parallel between this problem and a more standard finite-time Markov Decision Process $\mathcal{M} = (S, \calA, S_0, \calT, r, \gamma)$ seen in RL is outlined in Table \ref{table:rl_sap_eqiv}.
\begin{table}[h]
  \caption{Mapping between classic RL and SAP problem formulation.}
  \label{table:rl_sap_eqiv}
  \centering
  \begin{tabular}{lll}
    \toprule
    \cmidrule(r){1-2}
    \textbf{Classic RL Formulation}     & \textbf{SAP Formulation} \\
    \midrule
    $\calA$ (action space) & $X$ (valid assignment space)     \\
    $a\in\calA$ (action)     & $x\in X$ (assignment)      \\
    $r(s,a)$ (reward func.)     & $\sum_{i,j} [\betahat(s)]_{ij}x_{ij}$ (benefit func.)  \\
    \bottomrule
  \end{tabular}
\end{table}

In this setting, the $Q$-function for policy $\pi$ is defined as $Q^\pi(s_k,x_k):= \bbE^\pi [ r(s_k,x_k) + \sum_{t=k+1}^T \gamma^{t-k} r(s_t, x_t)]$, which represents the total expected future reward of beginning in state $s_k$, making assignment $x_k$, and following policy $\pi$ thereafter.

In order to scale solutions to large groups of agents and tasks, it is desirable to formulate this centralized RL problem as a MARL problem. In the MARL case, we define a \textit{joint assignment} $x=(x^1, \cdots, x^n)$ and a \textit{joint policy} $\pi=(\pi^1, \cdots, \pi^n)$, $x^i \sim \pi^i$. The assignment space for a single agent $x^i:=\argmax_j \ x_{ij}\in [m]$ is now a single integer denoting the task agent $i$ is assigned to.

We assume that the environment is partially observable, and that agents are equipped with an observation function $\calO^i:S \to O^i$ which they use to observe components of the state $o^i\sim\calO^i(s)\in O^i$ on which to condition their $Q$-functions.

We now describe the deficiencies in previous approaches to solving the SAP.

\section{Related Work}

\subsection{Classical methods for the sequential assignment problem}
Although efficient, optimal solutions exist for finding an optimal assignment for a single time step (\ref{eqn:standard-assignment-prob}), the SAP (\ref{eqn:sap_formulation}) is NP-hard except in trivial cases and is thus much harder to tackle using classical approaches. Much of the existing work relies on purely heuristic methods \cite{pachler2021static} or ignores the state-dependent aspect of the problem entirely \cite{vanphuc2022jointbeam}. In one recent work \cite{holder2024haal}, the authors develop HAAL, which uses information from several time steps to generate assignments in the style of model-predictive control, but the method is limited to a specific class of deterministic SAPs.

\subsection{RL for the sequential assignment problem}

Similar to \cite{shah2020neural}, we bootstrap RL learning from a greedy assignment mechanism and use Gaussian noise to induce exploration. However, Shah et al is focused on a particular variant of the SAP specific to the problem of pooling rides when ride-sharing. By contrast, we present a generalized method for solving the SAP across domains, and present both a theoretical analysis and an empirical comparison with state-of-the-art RL and MARL algorithms, providing intuition into the global convergence properties of the method.

Other work at the intersection of RL and assignment problems allows agents to directly learn to propose their value for completing a task \cite{chang2020decentralized}, introducing problems of incentive-compatibility, or do not directly learn the values of individual assignments \cite{hwang2007soccer}, greatly limiting the expressiveness of the method. By contrast, our method avoids such problems by using a centralized consensus mechanism, while maintaining the expressiveness of learned values.

\subsection{Why don't normal MARL methods work?}
Existing MARL techniques have seen success when applied to incredibly complex problems like DOTA 2 \cite{berner2019dota}, but the SAP poses a unique challenge.

The first challenge is deciding how to specify the reward function for an agent $i$, $r_i(s,x)$. Given that our objective is a global maximization over tasks completed by \textit{all} agents rather than any single agent, one might be tempted to use cooperative rewards, where $r_i(s,x)=r(s,x)$, as in \cite{rashid2020monotonic, sunehag2017value}. However, as we will see later, agents with this reward function struggle to disentangle the effect of their actions among the many other agents in the group, even when applying techniques like COMA, which are designed to enable a single agent to assess its counterfactual impact on the joint reward \cite{foerster2018counterfactual}.

Conversely, one might take note of the recent success of completely independent agents in cooperative domains \cite{de2020independent, papoudakis2020benchmarking, yu2022surprising} and provide agents with the rewards they yield solely from tasks they complete, $r_i(s,x^i) = [\hat{\beta}(s)]_{ix^i}$. When specifying rewards in this way, the learning problem is significantly easier because rewards correlate more directly with agent actions, and we hope that cooperation between agents in allocating tasks emerges naturally from the training process. Empirically, however, we see this behavior does not emerge easily, and that many agents simultaneously assign themselves to the most valuable tasks, even if the tasks can only be completed by a single agent.

This leads us to the second difficulty of applying MARL to assignment problems, which is that without an explicit constraint, joint assignments often consist of multiple agents completing the same task, meaning that $x\notin X$. Clearly, some centralization or communication between agents is necessary to ensure that agents do not duplicate assignments. Our method provides a principled approach for allowing agents to resolve conflicts and learn to make socially optimal and valid assignments $x\in X$.

\section{Method}\label{section:method}
The function $\alpha$ introduced in Equation \ref{eqn:standard-assignment-prob} is a solution to both of these difficulties. First, note that given information about \textit{the benefit to individual agents} $i$ for completing tasks $j$, $\beta_{ij}$, $\alpha$ yields assignments which are optimal \textit{on the level of the entire group}. Additionally, $\alpha$ yields joint assignments $x\in X$---i.e. those that avoid assigning multiple agents to the same task---by definition.

Thus, we would like to use $\alpha$ as our joint policy, $\pi(s)=\alpha(\beta)$. However, in the SAP setting the state-dependent nature of the planning problem means that we do not know the long-term benefit of assignments \textit{a priori}, and thus cannot easily access a $\beta\in\bbRnm$. Instead, we propose to \textit{learn} the expected future value of each assignment $i\to j$ from experience, and use these values as input to $\alpha$.


\subsection{Using agent Q-functions in the assignment mechanism $\alpha$}
Many $Q$-learning based algorithms such as DQN \cite{mnih2013playing} require the ability to take actions $\epsilon$-greedily with respect to the current policy, $x_{k}=\argmax_{x^*\in X} Q^\pi(s_k, x^*)$. However, in the assignment setting, acting greedily with respect to the joint policy $\pi$ becomes a difficult non-convex optimization problem. Our method provides a way of approximating this behavior.

A key part of why we can do this is because of a decomposition of the joint $Q$ function that exists in the assignment problem setting (similar to the one used in \cite{sunehag2017value}), which we outline in Theorem \ref{theorem:q_decomp}.
\begin{restatable}[Decomposition of $Q^\pialpha$ into $Q_i^\pialpha$]{thm}{decompthm}\label{theorem:q_decomp}
    Let $\pi^\alpha: S \to X$ be a constant, deterministic joint policy. Define the $Q$-function for an individual agent with respect to this joint policy $\pi^\alpha$ as:
    \begin{equation*}
        Q^\pialpha_i(s_k, j) := \bbE^\pialpha \bigg[ r_i(s_k, j) + \sum_{t=k+1}^T \gamma^{t-k} r_i(s_t, x_t^i)\bigg]
    \end{equation*}

    Then, in the assignment problem setting, where $r(s,x)=\sumij\hat{\beta}(s)x_{ij} = \sum_{i=1}^n r_i(s,x^i)$,
    \begin{equation}\label{eqn:q_decomp}
        Q^\pialpha(s_k, x) = \sumij Q_i^\pialpha(s_k, x^i)x_{ij} \quad \text{for } \; x=\pi^\alpha(s_k).
    \end{equation}
\end{restatable}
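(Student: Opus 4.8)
The plan is to unfold the definition of the joint $Q$-function, push the additive reward decomposition through the expectation, and collect terms agent-by-agent. Starting from
$$Q^\pialpha(s_k, x) = \bbE^\pialpha\bigg[ r(s_k, x) + \sum_{t=k+1}^T \gamma^{t-k} r(s_t, x_t) \bigg],$$
I would substitute the hypothesis $r(s_t, x_t) = \sum_{i=1}^n r_i(s_t, x_t^i)$ into the immediate reward and into every future reward term. Since the sums over time and over agents are finite, linearity of expectation lets me interchange the sum over agents $i$ with $\bbE^\pialpha$, yielding $\sum_{i=1}^n \bbE^\pialpha[ r_i(s_k, x^i) + \sum_{t=k+1}^T \gamma^{t-k} r_i(s_t, x_t^i)]$. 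Each summand is exactly the definition of $Q_i^\pialpha(s_k, x^i)$, so the expression collapses to $\sum_{i=1}^n Q_i^\pialpha(s_k, x^i)$.

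The step that needs genuine care---and the reason the claim is stated only for $x = \pialpha(s_k)$---is verifying that the expectation defining the joint $Q$ and the expectations defining the individual $Q_i^\pialpha$ are taken over one and the same trajectory distribution. In the joint $Q$ we condition on taking the full assignment $x$ at step $k$ and then rolling out $\pialpha$; in $Q_i^\pialpha(s_k, x^i)$ we condition on agent $i$ playing $x^i$ at step $k$ and then rolling out $\pialpha$. Because $x = \pialpha(s_k)$, agent $i$'s action $x^i$ is precisely its on-policy action, so conditioning on it is consistent with the joint action. As $\pialpha$ is deterministic and stationary, once $s_k$ and this on-policy action are fixed, the entire downstream process---all states $s_{k+1}, s_{k+2}, \dots$ drawn from $\calT$ and all joint actions $x_{k+1}, x_{k+2}, \dots$---is the identical random process in both definitions. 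Hence the future reward streams $\{r_i(s_t, x_t^i)\}_{t>k}$ appearing in the two places have matching joint distributions, which is what legitimizes identifying each summand with $Q_i^\pialpha(s_k, x^i)$.

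Finally, I would convert $\sum_{i=1}^n Q_i^\pialpha(s_k, x^i)$ into the bilinear form of Equation \ref{eqn:q_decomp} by using that $x \in X$ is a valid assignment, so $\sum_{j=1}^m x_{ij} = 1$ for every $i$; multiplying each term $Q_i^\pialpha(s_k, x^i)$ by this sum gives $\sumij Q_i^\pialpha(s_k, x^i) x_{ij}$. I expect the main obstacle to be the trajectory-matching argument of the second paragraph rather than any algebra: the decomposition is a consequence of additive rewards and linearity, but it is only meaningful because the restriction $x = \pialpha(s_k)$ forces the per-agent and joint rollouts to coincide. Away from the on-policy action, $Q_i^\pialpha$ would implicitly require specifying the other agents' off-policy behavior, and the identity would break.
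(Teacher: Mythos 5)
Your proposal is correct and follows essentially the same route as the paper's proof: unfold the definition of $Q^\pialpha$, push the additive reward $r(s,x)=\sum_i r_i(s,x^i)$ through the expectation by linearity, identify each summand with $Q_i^\pialpha(s_k,x^i)$ when $x=\pialpha(s_k)$, and then rewrite the sum in bilinear form using the row-sum property of a valid assignment (the paper uses the equivalent observation that $j\neq x^i \implies x_{ij}=0$). Your second paragraph, which spells out why the on-policy restriction makes the joint and per-agent trajectory distributions coincide, is a welcome elaboration of a step the paper dismisses as ``clearly equivalent.''
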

\begin{proof}
    This follows from the fact that $r(s,x)=\sum_{i=1}^n r_i(s,x^i)$---that is, the sum of the reward for the individual agents is equal to the joint reward, and the reward for agent $i$ is conditioned only on agent $i$'s assignment, $x^i$. See the supplemental materials for the complete proof.
\end{proof}

In words, given a state $s_k$, $Q^\pialpha_i(s_k, j)$ is defined as the total expected future reward that agent $i$ will obtain, given that agent $i$ is assigned to task $j$, and then that all agents follow the joint policy $\pialpha$ for future assignments. Given the partial observability of our environment, in practice we make the approximation $Q_i^\pialpha(s,j)\approx Q^\pialpha_i(\calO^i(s),j)$.

Here, we can begin to see the clear connection between Equations \ref{eqn:standard-assignment-prob} and \ref{eqn:q_decomp}; if we define our benefit matrix (previously $\beta$) to be $\bfQ^\pi_k\in\bbRnm$ such that $[\bfQ^\pialpha_k]_{ij} = Q^\pialpha_i(o^i_k,j)$, then $\argmax_{x^*\in X} Q^\pialpha(s_k, x^*) \approx \alpha(\bfQ^\pi_k)$. Then, we can make assignments according to:
\begin{equation}\label{eqn:reda_policy}
    x_k=\alpha(\bfQ^{\pialpha}_k)\approx\argmax_{x^*\in X}Q^{\pialpha}(s_k, x^*).
\end{equation}

While Equation \ref{eqn:q_decomp} only holds for $x=\pialpha(s)$, if we assume that policies change slowly during the learning process such that $\alpha(\bfQ^\pialpha_k)\approx \pialpha(s)$, then Equation \ref{eqn:reda_policy} is a valid approximation. Thus, when we learn estimates of $Q^\pialpha_i(o^i_k, j)$ directly from experience, we can build $\bfQ_k^\pialpha$ and have agents act not by picking assignments that are best for themselves, but through the mechanism $\alpha$ which is guaranteed to return a socially optimal outcome for the group. This motivates our algorithm which we fully specify in the next section.

\subsection{RL-Enabled Distributed Assignment (REDA)}
REDA, described in Algorithm \ref{alg:REDA} and depicted in Figure \ref{fig:architecture}, is our method for generating solutions to the SAP, with key differences from a standard independent DQN algorithm (e.g. \cite{tampuu2017multiagent}) on lines \ref{algline:jumpstart_explore}-\ref{algline:explore_end} and \ref{algline:build_target_start}-\ref{algline:build_target_end}.

\begin{algorithm}[ht]
    \caption{RL-Enabled Distributed Assignment (REDA)}\label{alg:REDA}
    \textbf{Given}: state-dependent benefit function $\hat{\beta}:\calS \to \bbRnm$ 
    \begin{algorithmic}[1]
        \State Initialize $Q$-network parameters $\theta$, target $Q$-network parameters $\bar{\theta}=\theta$
        \State Initialize a replay buffer $D$
        \For{episode $e=1, 2, ...$}
            \For{time step $k=1,...,T$}
                \State Collect joint observation $o_{k}=(o_{k}^1, \cdots, o_{k}^n)$
                \State With probability $\epsilon$: $x_k \gets \alpha(\hat{\beta}(s_k))$ \ \textit{(act greedily w/r/t the current benefit matrix)}\label{algline:jumpstart_explore}
                \State Otherwise:\label{algline:greedy_act}
                \State$\quad$ Build $\bfQ^\pi_k$ such that $[\bfQ^\pi_k]_{ij} \gets Q^\pi_i(o^i_k, j; \theta)$
                \State $\quad$ $\bfQ_{\text{avg}} \gets \frac{1}{nm}\sumij |[\bfQ^\pi_k]_{ij}|$
                \State $\quad$ Generate perturbation matrix $\boldsymbol{\xi}\in\bbRnm$, where $\boldsymbol{\xi}_{ij} \sim N(0, 2\bfQ_{\text{avg}}\epsilon)$
                \State $\quad$ $x_k \gets \alpha(\bfQ^\pi_k + \boldsymbol{\xi})$ \ \textit{(act $\sim$optimally w/r/t the estimated values of $Q^\pi_i(o^i_k, j; \theta)$)}\label{algline:explore_end}
                \State Collect joint assignment $x_k=(x_k^1, \cdots, x_k^n)$
                \State Collect joint reward $r_k=(r_k^1, \cdots, r_k^n)$,\newline where $r^i_k \gets r_i(s_k, x^i_k)$
                \State Observe next state $s_{k+1}\sim\calT(s_k, x_k)$ \State Collect joint observation $o_{k+1}$
                \State Store joint transition $(o_k, x_k, r_k, o_{k+1})$ in $D$
                \State Sample random mini-batch of $B$ joint transitions $(o_t, x_t, r_t, o_{t+1})$ from $D$
                \If{$s_{t+1}$ is terminal}
                    \State Targets $y_t^i \gets r_t^i$ for all $i$
                \Else
                    \State Build $\bfQ^\pi_{t+1}$, $[\bfQ^\pi_{t+1}]_{ij} \gets Q^\pi_i(o_{t+1}^i, j; \theta)$\label{algline:build_target_start}
                    \State $x_{t+1} \gets \alpha(\bfQ^\pi_{t+1})$
                    \State Targets $y_t^i \gets r_t^i + \gamma Q_i^\pi(o_{t+1}^i, x_{t+1}^i; \bar{\theta}) \ \forall i$\label{algline:build_target_end}
                \EndIf
                \State Loss $\mathcal{L}(\theta) \gets \frac{1}{B}\sum\limits_{t=1}^B \sum_{i=1}^n \bigg( y_t^i - Q(o_t^i, x_t^i; \theta) \bigg)^2$
                \State Update parameters $\theta$ by minimizing $\mathcal{L}(\theta)$
                \State Update target network parameters $\bar{\theta}$ periodically
            \EndFor
        \EndFor
    \end{algorithmic}
\end{algorithm}

\newpage
\textbf{Bootstrapping from a greedy policy.} Sequential assignment problems are unique in that there always exists a sub-optimal, non-parametrized policy with which we can bootstrap our policy from; $\pi(s) := \alpha(\hat{\beta}(s))$, where we simply make the greedy assignment with respect to the current benefit matrix at every time step, without regard for future benefits. At the beginning of training, we act with this greedy policy with probability $\epsilon$, filling our replay buffer with reasonable state-assignment pairs before beginning to learn to improve on this policy. 

\textbf{Exploration.} To induce further exploration, we also add randomly distributed noise $\boldsymbol{\xi}$ to $\bfQ^\pi$, scaled by the current average magnitude of $\bfQ^\pi$, $\bfQ_{\text{avg}}$, such that sub-optimal joint assignments are selected with some probability. This is a more effective exploration strategy than making entirely random joint assignments $x\in X$ given the size of the search space, $|X|=\frac{m!}{(m-n)!}$.

Unlike previous work \cite{shah2020neural}, perturbations do not need to be tuned according to reward magnitude because the noise is scaled directly according to the values of $\bfQ^\pi$.

\textbf{Target specification.} Another important aspect of REDA is the way learning targets are specified. Because the policy $\pi$ can only select assignments $x \in X$, targets must also satisfy this constraint. In other words, Lines \ref{algline:build_target_start} through \ref{algline:build_target_end} express $y=r+\max_{x^*\in X} Q^\pi(s,x)$ rather than $y=r+\max_{x^*} Q^\pi(s,x)$. We find the best assignment $x^*\in X$ by again using $\alpha$---following the standard DQN paradigm \cite{mnih2013playing}, $\bfQ_{t+1}^\pi$ is generating using the value network with parameters $\theta$, but the assignment $x_{t+1}$ is evaluated using the target network with parameters $\bar{\theta}$.

\textbf{Computing $\alpha(\bfQ^\pi)$ in a distributed way.} In deployment, each agent can independently compute $\alpha(\bfQ^\pi)$, either by receiving the values of $Q_i^\pi(o_k^i,j) \ \forall i, j$, or by using market-based mechanisms in which agents exchange bids with neighboring agents until they are matched with the task for which they are willing to pay more than all other agents, as in \cite{zavlanos2008distributed}. This means that as long as agents can communicate about task values, the algorithm can be executed in a distributed fashion.

\subsection{Theoretical justification}
To further motivate why REDA produces sensible policies, we can show that the REDA target update causes $Q_i$ to converge to the true $Q^\pi_i$ under reasonable assumptions. 
\begin{restatable}{lmma}{contractlemma}\label{lemma:contraction}
    Let $Q_i\in\calQ$ be an arbitrary $Q$-function. Let $F:\calQ\to\calQ$ be the operator corresponding to the REDA target update in the tabular case, without target networks or the greedy guide policy:
    \begin{equation*}
        (FQ_i)(o_k^i,j) = \bbE^\pi \bigg[ r_i(s_k,j) + \gamma Q_i(o_{k+1}^i, x_{k+1}^i)\bigg]
    \end{equation*}
    
    Assume a finite observation space, and that each observation-assignment pair $(o^i,j)$ is visited infinitely often under a constant policy $\pi$. Then, if $Q^{n+1}_i\gets FQ^{n}_i$, $\lim_{n\to\infty}Q^n_i = Q^{\pi}_i$.
\end{restatable}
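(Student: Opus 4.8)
The plan is to recognize $F$ as the standard Bellman policy-evaluation operator for agent $i$ under the fixed policy $\pi$, and to prove convergence by the contraction-mapping (Banach fixed-point) argument. Concretely, I would (i) equip the space $\calQ$ of bounded functions on the finite set of observation--assignment pairs $(o^i,j)$ with the sup-norm $\|Q_i\|_\infty := \sup_{o^i,j}|Q_i(o^i,j)|$, which makes $\calQ$ a complete metric space; (ii) show $F$ is a $\gamma$-contraction on this space; (iii) invoke Banach's theorem to conclude that $F$ has a unique fixed point and that the iterates $Q_i^{n+1}=FQ_i^n$ converge to it from any starting point; and (iv) verify that $Q_i^\pi$ is that fixed point, so the limit is exactly $Q_i^\pi$.

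For the contraction step, the key observation is that the reward term $r_i(s_k,j)$ in the definition of $F$ does not depend on the argument $Q_i$, so it cancels when comparing two $Q$-functions. For any $Q_i, Q_i'\in\calQ$ and any pair $(o^i_k,j)$,
\begin{equation*}
    \big|(FQ_i)(o^i_k,j) - (FQ_i')(o^i_k,j)\big| = \gamma\,\Big|\bbE^\pi\big[Q_i(o^i_{k+1},x^i_{k+1}) - Q_i'(o^i_{k+1},x^i_{k+1})\big]\Big| \le \gamma\,\|Q_i - Q_i'\|_\infty,
\end{equation*}
where the last step bounds the integrand pointwise by $\|Q_i-Q_i'\|_\infty$ and uses that the expectation is an averaging operator (a stochastic kernel) that preserves this bound. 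Taking the supremum over $(o^i_k,j)$ gives $\|FQ_i - FQ_i'\|_\infty \le \gamma\,\|Q_i-Q_i'\|_\infty$, and since $\gamma<1$ this is a contraction. To identify the fixed point, I would unroll the definition of $Q_i^\pi$ and apply the tower property of conditional expectation: peeling off the first reward and conditioning on the next observation shows $FQ_i^\pi = Q_i^\pi$, so by uniqueness the Banach limit coincides with $Q_i^\pi$.

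The main subtlety, and the step I would treat most carefully, is the partial observability: $Q_i$ is a function of the observation $o^i$ rather than the full state $s$, while the rewards and transitions are governed by $s$. The contraction bound itself is insensitive to this---it holds for any well-defined expectation operator---but the fixed-point identification in step (iv) requires that the expectation $\bbE^\pi[\cdot]$ appearing in $F$ faithfully reproduces the recursive structure of $Q_i^\pi$ over observations. This is where the hypotheses matter: a \emph{constant}, deterministic policy $\pi$ makes the induced observation process well-behaved, and the assumption that every pair $(o^i,j)$ is visited infinitely often guarantees that $F$ and its fixed point are defined on the entire reachable observation space, so that the sup-norm is taken over a consistent finite index set. (If one instead interprets the update as sampled rather than as the exact expectation, that same assumption is precisely the condition needed to invoke standard stochastic-approximation convergence results for $Q$-learning in place of the deterministic Banach argument.)
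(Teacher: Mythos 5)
Your proposal is correct and follows essentially the same route as the paper's own proof: establish that $F$ is a $\gamma$-contraction in the sup-norm (the reward term cancels and the expectation preserves the pointwise bound), invoke the Banach fixed-point theorem, and identify $Q_i^\pi$ as the unique fixed point by unrolling its definition. Your added care about the partial-observability subtlety is a nice touch --- the paper handles the same issue by writing $s_k \sim o_k^i$ as an explicit abuse of notation --- but it does not change the substance of the argument.
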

\begin{proof}
The REDA target update is analogous to a SARSA update \cite{singh2000convergence}, so this can be easily proven by showing that $F$ is a $\gamma$-contraction on the space of $Q$-functions, and that $Q_i^\pi$ is the unique fixed point of this contraction. See the supplemental materials for the complete proof.
\end{proof}

The critical assumption inherent in Lemma \ref{lemma:contraction} is that the policy does not change before each observation-assignment pair is visited infinitely many times. However, assuming that a sufficiently small learning rate is chosen, Lemma \ref{lemma:contraction} will approximately hold for REDA and it can be assumed that the $Q$-values used in the mechanism $\alpha$ will converge to their desired values, $Q^\pi_i$.

This has several important implications. First, in situations where agents are providing information to centralized mechanisms for assignment, one often has to consider incentive-compatibility and whether agents are being truthful about the information they provide. Lemma \ref{lemma:contraction} shows that based on REDA's training process, the values $Q_i(o^i,j)$ used in the assignment mechanism $\alpha$ will indeed converge to $Q_i^\pi(o^i,j)$ as desired, and that agents are not able to act strategically or lie for personal benefit, a claim which we verify by experiment.

Second, it motivates that REDA is a method of approximating DQN on the joint $Q$-function $Q^\pialpha$. Because Lemma \ref{lemma:contraction} states that given enough updates, $Q_i \to Q_i^\pialpha$, Equation \ref{eqn:reda_policy} holds and $\alpha(\bfQ^\pialpha)\approx \argmax_{x^*\in X} Q^\pialpha(s,x^*)$. Then, acting according to $x_{k}=\alpha(\bfQ^\pialpha_k + \boldsymbol{\xi})$ is approximately $\epsilon$-greedy action selection with respect to $Q^\pialpha$, and target updates $\sum_{i=1}^n y^i_k=\sum_{i=1}^n r^i_k+\alpha_i(\bfQ^\pialpha_{k+1})$ approximate an optimal target update $y_k = r_k+\max_{x^*\in X}Q^\pialpha(s_{k+1},x^*)$, where $\alpha_i(\beta)$ is the value provided to agent $i$ from assignment $\alpha(\beta)$. Thus, we can expect REDA to inherit similar properties relating to the convergence of $Q^\pialpha$ to $Q^{\pi^*}$.

This key insight can explain REDA's strong performance at the system level as compared to state-of-the-art RL methods which act with independent agents, as we will see in the following section.

\section{Empirical Experiments}
\label{section:experiments}
We first test REDA in a simple SAP setting to provide intuition about why it is able to outperform existing methods in the literature. Then, we scale it up, applying it to a complex satellite constellation task allocation environment with hundreds of satellites and tasks, showing the power and efficiency of this method.

\subsection{Does REDA encourage unselfish behavior?}
\begin{figure*}[ht]
\centering
\begin{minipage}{.48\textwidth}
  \centering
  \includegraphics[width=\linewidth]{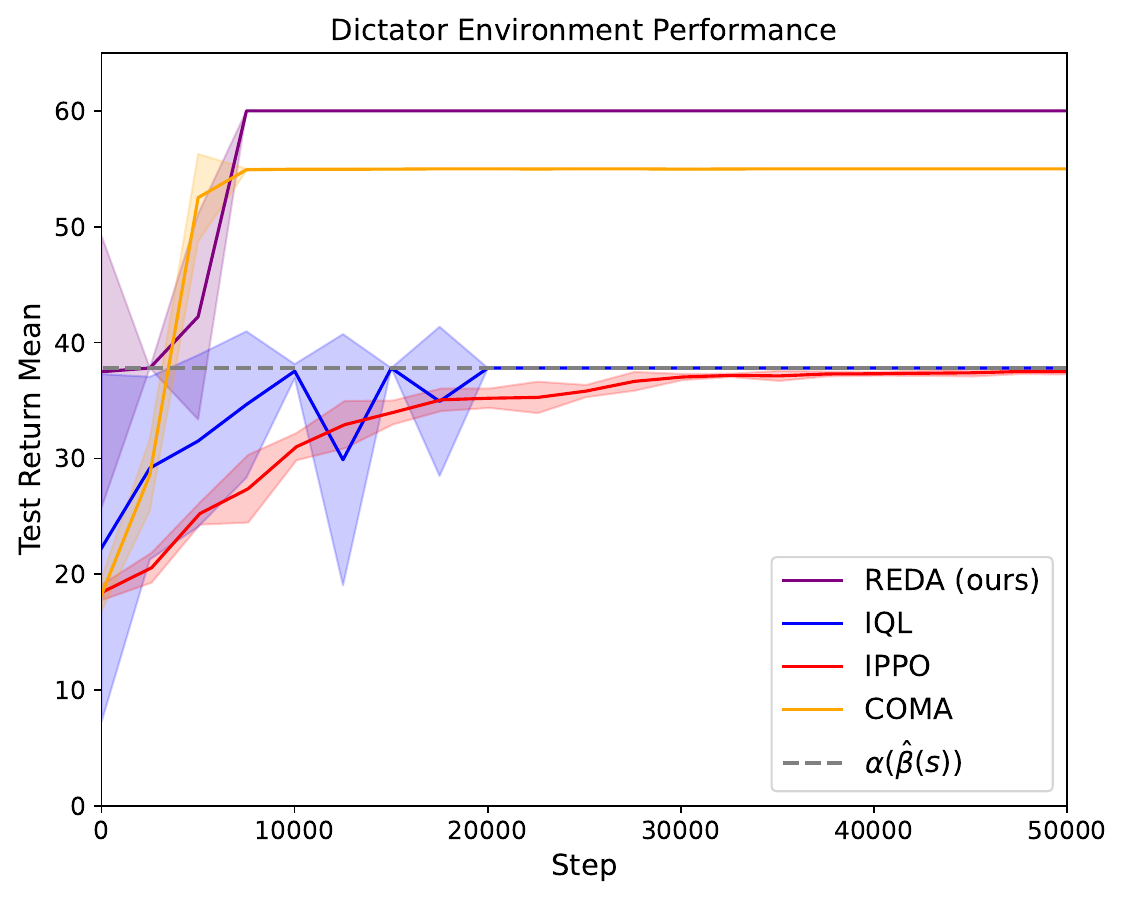}
  \captionof{figure}{Performance over 5 runs of various algorithms in dictator environment, shown with standard deviation shaded. Note that after $\epsilon$ decays to $0$ at $t=10${\small,}$000$, performance for REDA instantly approaches the theoretical maximum, while other algorithms remain significantly below the maximum.}
  \label{fig:dictator}
\end{minipage}%
\hfill
\begin{minipage}{.48\textwidth}
  \centering
  \includegraphics[width=\linewidth]{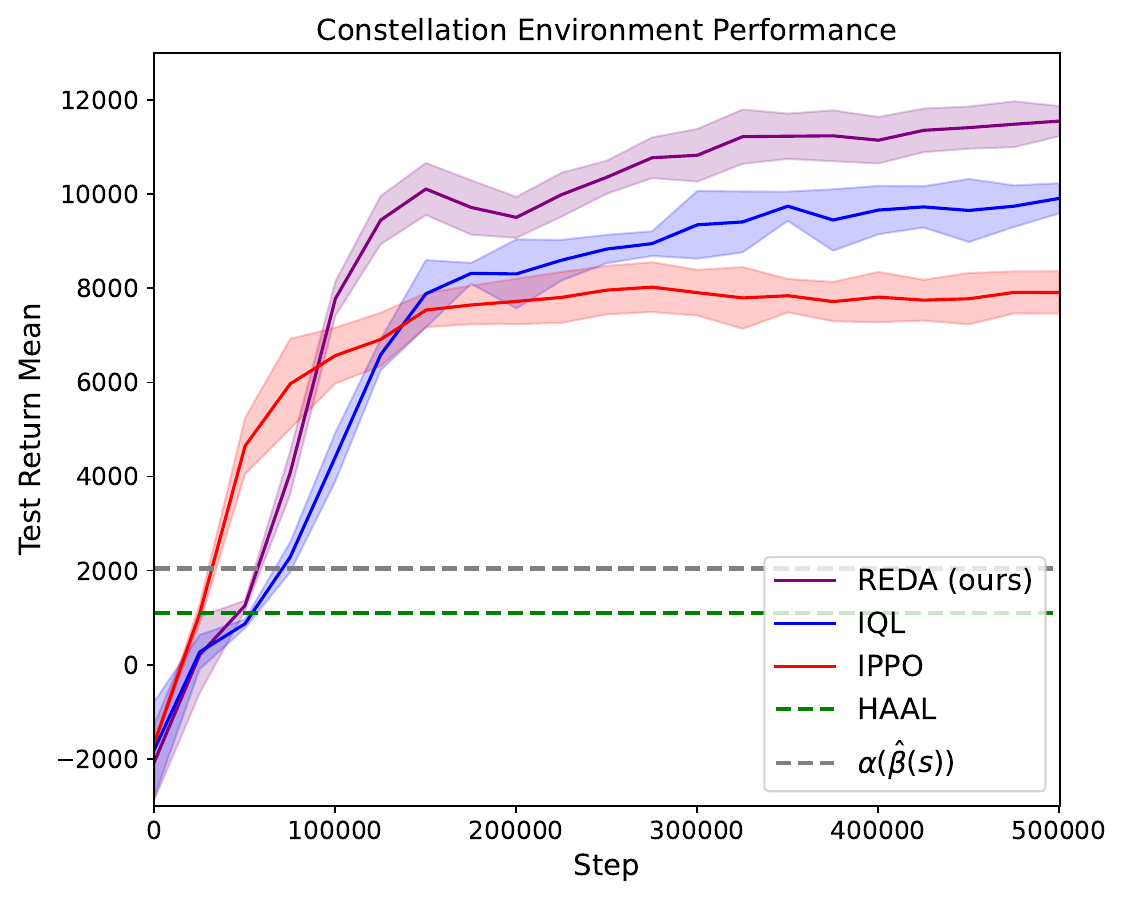}
  \captionof{figure}{Performance over $5$ runs of various algorithms in a realistic constellation environment with $324$ satellites and $450$ tasks, shown with standard deviation shaded. $\epsilon$ is decayed to zero over $300$k time steps. REDA consistently converges and obtains more reward than all other tested algorithms.}
  \label{fig:constellation_results}
\end{minipage}
\end{figure*}
We design our first experiment to test whether REDA can avoid selfish assignments. We have three agents, three tasks, and three states. Agent $1$ is the ``dictator" in that its assignment fully dictates the state transition, $s_{k+1}=x_k^1$. $\hat{\beta}(s)$ is specified as follows:
\begin{equation*}
    \hat{\beta}(1)=\begin{bmatrix}
        2 & 3 & 0 \\ 0 & 2 & 3 \\ 3 & 0 & 2
    \end{bmatrix}, \quad \hat{\beta}(2) = \begin{bmatrix}
        0 & 3 & 0 \\ 0 & 0 & 0.1 \\ 0.1 & 0 & 0
    \end{bmatrix},
\end{equation*}
\begin{equation*}
    \hat{\beta}(3) = \begin{bmatrix}
        0 & 0 & 3 \\ 0.1 & 0 & 0 \\ 0 & 0.1 & 0
    \end{bmatrix}
\end{equation*}

Rather than requiring $x\in X$, which cannot be satisfied by existing algorithms, we specify that when agents assign themselves to a task, they receive benefit corresponding to what proportion of the task they completed (i.e. $50\%$ when two agents complete the same task). This disincentivizes duplicated assignments, and means that in this SAP, the optimal policy is the joint assignment $x_k=(1, 2, 3)$ for all $k$, yielding $60$ reward over $10$ time steps. However, the greedy optimal policy for agent $1$ is to continually assign itself to task $2$, securing $3$ reward for itself each time step but causing the system as a whole to receive far less benefit by driving the state to $s=2$. 

Performance of various algorithms in this environment is shown in Figure \ref{fig:dictator}. Qualitative analysis of the results yields significant insights into the trade-offs of each method. While REDA immediately drives the group to optimal joint policy, both IQL and IPPO reliably converge to the greedy joint policy $x_k=(2,3,1) \ \forall k$. This means the dictator is acting selfishly and driving the system to state $2$ at the expense of other agents---in the satellite context, this might correspond to a satellite completing a high priority task itself rather than allowing a better suited satellite to complete it instead.

COMA avoids this greedy behavior, but still has difficulty disentangling the effect of agents on the joint reward. In experiments, it converges to the joint policy $x_k=(1,3,1) \ \forall k$, which suggests that it cannot determine whether agent $1$ or $3$ should complete task $1$.

\subsection{Can REDA be applied to large problems?}
To demonstrate the ability of REDA to scale to large complex environments, we use satellites in a constellation as agents and points on the Earth's surface as tasks. We generate a constellation of $324$ satellites evenly distributed around the Earth, with $450$ randomly placed tasks simulating internet users. Some tasks are assigned a higher priority, and thus provide higher reward for satellites who complete them. The state-dependent benefit of an assignment $i\to j$ consists of three components:
\begin{itemize}
    \item The priority of task $j$, and its distance from satellite $i$'s position in orbit.
    \item A penalty for switching assignments between two time steps (i.e. $x_{k-1}^i \neq j$), corresponding to the energy and time expenditure required for changing satellite orientation.
    \item The power state $p^i$ of the satellite. Starting at $1$ power, each time a satellite is assigned to a task in its range of visibility, $0.2$ power is expended, and being assigned to a task out of view corresponds to charging, raising power by $0.1$. Once a satellite is out of power, it captures no benefit for the rest of the episode.
\end{itemize} 

\begin{figure*}[ht!]
  \centering
  \includegraphics[width=1.5\columnwidth]{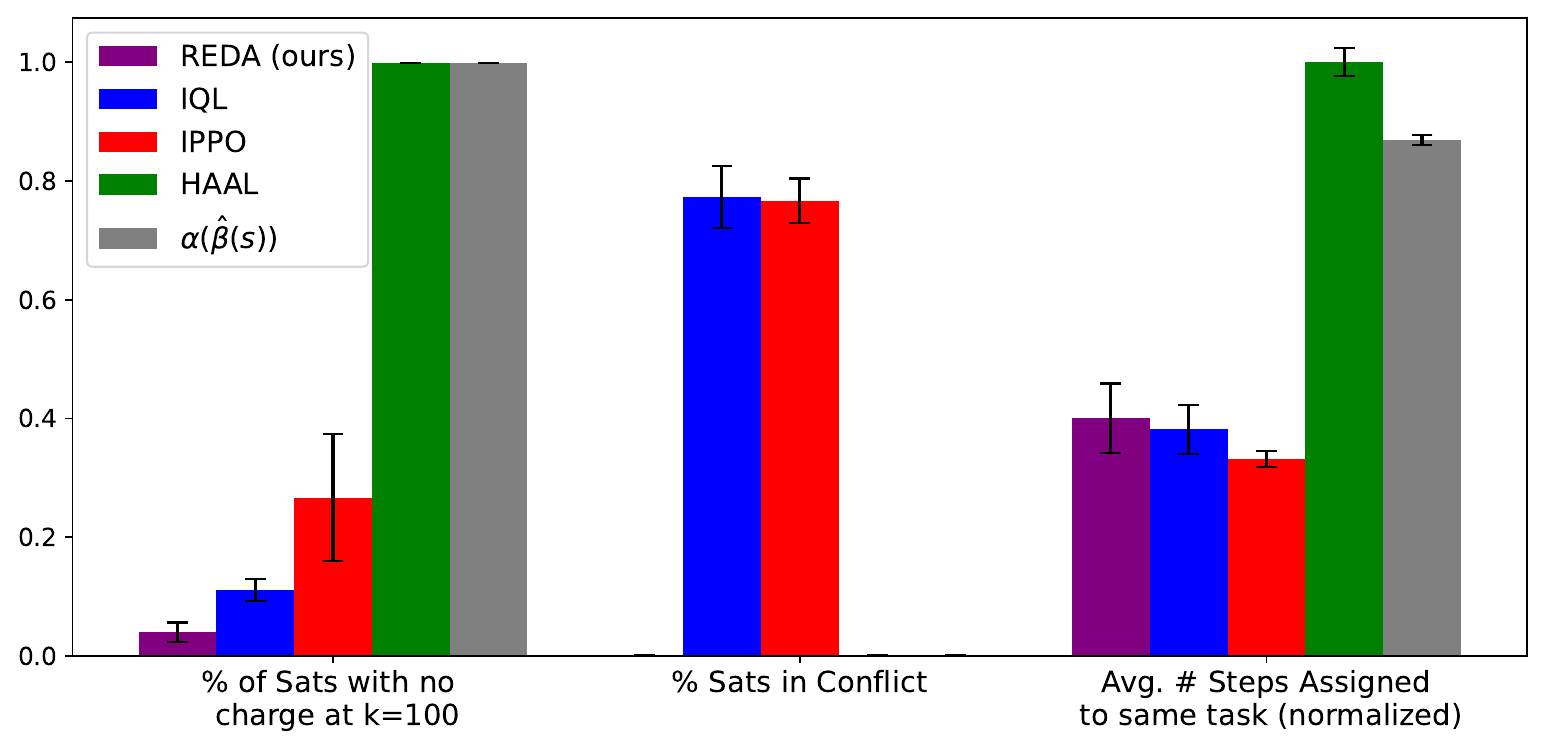}
  \caption{Performance of tested algorithms on various metrics; percentage of satellites without charge at the end of the episode (lower is better), percentage of satellites completing the same task as another satellite (lower is better), and the average number of time steps satellites are assigned to the same tasks (higher is better). We can see that REDA outperforms IQL and IPPO across the board, while avoiding having satellites run out of power as when using classical methods like HAAL.}
  \label{fig:qual_const_results}
\end{figure*}

This is a very challenging optimization problem, as satellites have to balance multiple conflicting priorities: ensuring that they are assigned to tasks that they will remain close to in their upcoming orbital motion (so as to avoid having to change assignments frequently), equitably distributing tasks amongst the hundreds of satellites, and managing their power state over a $100$ time step episode.

In order to scale to this large problem, and given that a single satellite can only physically observe a small portion of Earth's surface at a given time, we model the environment as a Partially Observed Markov Decision Process. Specifically, observations are limited to information on the top $10$ closest tasks, the previous assignment $x_{k-1}^i$, and the power state $p^i$, as well as information related to the nearest $10$ satellites to satellite $i$ in orbit.

Similarly, in reality satellites can only complete a subset of the tasks at a given time. Thus, we limit the size of the action space to $11$, the first $10$ corresponding to an assignment to the top $10$ closest tasks, with the remaining action corresponding to completing any other task. A satellite being assigned to a task it cannot complete can be interpreted as a satellite deciding to forgo benefit at that time step and instead charge its battery. A full description of the experimental setup is provided in the supplemental materials.

All tested algorithms operate with this reduced observation and action space to maintain parity. Results for COMA are not provided because COMA requires training a centralized critic that can learn the value of all possible joint assignments, making it impractical for problems with a large joint state space.

Figure \ref{fig:constellation_results} shows the results of our algorithm in this environment. We find that REDA has low variance and consistently outperforms all other tested algorithms. To ensure a fair comparison, IQL was also provided with pretraining from the greedy policy, while IPPO was trained from scratch after behavior cloning on the greedy policy was found not to be beneficial. See the appendix for further details on hyperparameter selection, network architecture, and compute requirements.

In Figure \ref{fig:qual_const_results}, we can see qualitatively why REDA performs so well on this task. State-of-the-art MARL approaches avoid running out of power, but generate assignments with a significant amount of conflict. Classical approaches like HAAL make consistent assignments, but fail to manage their power state over the course of the episode, running out of power far before the end of the episode. REDA succeeds in all three areas; it entirely eliminates conflicting assignments and minimizes changes in assignment to the extent possible while still successfully managing satellite power states over the entire episode.

\section{Discussion}
We present REDA, a MARL method for learning efficient solutions to complex, state-dependent assignment problems. Rather than allow agents to act completely independently or attempt to learn the effect of their actions on the reward of the entire group, REDA strikes a balance, allowing agents to focus on learning about their personal rewards while selecting actions according to an optimal distributed task assignment mechanism. This allows the algorithm to ensure agents act unselfishly while learning efficient solutions even in very large problem settings.

\subsection{Limitations and Future Work}\label{section:limitations}
One limitation is that assignment problems, even when including state dependence, lack the expressiveness needed for certain problem settings. For example, REDA assumes that the reward can indeed be decomposed as $r(s,x)=\sum_{i=1}^n r_i(s,x^i)$. In certain scenarios, like when satellite beams can cause frequency interference with one another, this assumption does not hold. Additionally, as presented REDA only applies to scenarios where agents can only complete a single task, and where tasks can only be completed by a single agent. Future work will investigate to what extent REDA can be applied to a broader class of problems.

However, even this somewhat more limited set of problems is still broad enough to encompass a large variety of pressing abstract and practical problems. For example, REDA can be straightforwardly applied to the multiple Traveling Salesman Problem, another hugely important combinatorial optimization problem.

We also plan to apply REDA in a variety of applied problem settings, including distributed power grid management and large-scale transportation networks. REDA could even be used as a high-level planner in a hierarchical RL setting.

\clearpage
\newpage

\section{Funding Sources}
This material is based on work supported by the National Science Foundation Graduate Research Fellowship under grant DGE-2140004.
The research of M. Mesbahi has been supported by AFOSR grant FA9550-20-1-0053. The research of N. Jaques has been supported by the Amazon Middle Mile Products and Technology gift award.

\bibliography{aaai25}

\begin{thebibliography}{26}
\providecommand{\natexlab}[1]{#1}

\bibitem[{Albrecht, Christianos, and Sch\"afer(2024)}]{marl-book}
Albrecht, S.~V.; Christianos, F.; and Sch\"afer, L. 2024.
\newblock \emph{Multi-Agent Reinforcement Learning: Foundations and Modern Approaches}.
\newblock MIT Press.

\bibitem[{Azagirre et~al.(2024)Azagirre, Balwally, Candeli, Chamandy, Han, King, Lee, Loncaric, Martin, Narasiman et~al.}]{azagirre2024better}
Azagirre, X.; Balwally, A.; Candeli, G.; Chamandy, N.; Han, B.; King, A.; Lee, H.; Loncaric, M.; Martin, S.; Narasiman, V.; et~al. 2024.
\newblock A Better Match for Drivers and Riders: Reinforcement Learning at {L}yft.
\newblock \emph{INFORMS Journal on Applied Analytics}, 54(1): 71--83.

\bibitem[{Berner et~al.(2019)Berner, Brockman, Chan, Cheung, Debiak, Dennison, Farhi, Fischer, Hashme, Hesse et~al.}]{berner2019dota}
Berner, C.; Brockman, G.; Chan, B.; Cheung, V.; Debiak, P.; Dennison, C.; Farhi, D.; Fischer, Q.; Hashme, S.; Hesse, C.; et~al. 2019.
\newblock Dota 2 with large scale deep reinforcement learning.
\newblock \emph{arXiv preprint arXiv:1912.06680}.

\bibitem[{Bui et~al.(2022)Bui, Van~Chien, Lagunas, Grotz, Chatzinotas, and Ottersten}]{vanphuc2022jointbeam}
Bui, V.-P.; Van~Chien, T.; Lagunas, E.; Grotz, J.; Chatzinotas, S.; and Ottersten, B. 2022.
\newblock Joint Beam Placement and Load Balancing Optimization for Non-Geostationary Satellite Systems.
\newblock In \emph{2022 IEEE International Mediterranean Conference on Communications and Networking (MeditCom)}, 316--321.

\bibitem[{Chang et~al.(2020)Chang, Kaushik, Weinberg, Griffiths, and Levine}]{chang2020decentralized}
Chang, M.; Kaushik, S.; Weinberg, S.~M.; Griffiths, T.; and Levine, S. 2020.
\newblock Decentralized reinforcement learning: Global decision-making via local economic transactions.
\newblock In \emph{International Conference on Machine Learning}, 1437--1447. PMLR.

\bibitem[{De~Witt et~al.(2020)De~Witt, Gupta, Makoviichuk, Makoviychuk, Torr, Sun, and Whiteson}]{de2020independent}
De~Witt, C.~S.; Gupta, T.; Makoviichuk, D.; Makoviychuk, V.; Torr, P.~H.; Sun, M.; and Whiteson, S. 2020.
\newblock Is independent learning all you need in the starcraft multi-agent challenge?
\newblock \emph{arXiv preprint arXiv:2011.09533}.

\bibitem[{Foerster et~al.(2018)Foerster, Farquhar, Afouras, Nardelli, and Whiteson}]{foerster2018counterfactual}
Foerster, J.; Farquhar, G.; Afouras, T.; Nardelli, N.; and Whiteson, S. 2018.
\newblock Counterfactual multi-agent policy gradients.
\newblock In \emph{Proceedings of the AAAI conference on artificial intelligence}, volume~32.

\bibitem[{Gerkey and Matari{\'c}(2004)}]{gerkey2004formal}
Gerkey, B.~P.; and Matari{\'c}, M.~J. 2004.
\newblock A formal analysis and taxonomy of task allocation in multi-robot systems.
\newblock \emph{The International journal of robotics research}, 23(9): 939--954.

\bibitem[{Giovanelli et~al.(2019)Giovanelli, Kilkki, Sierla, Seilonen, and Vyatkin}]{giovanelli2019power}
Giovanelli, C.; Kilkki, O.; Sierla, S.; Seilonen, I.; and Vyatkin, V. 2019.
\newblock Task Allocation Algorithm for Energy Resources Providing Frequency Containment Reserves.
\newblock \emph{IEEE Transactions on Industrial Informatics}, 15: 677--688.

\bibitem[{Holder, Kraisler, and Mesbahi(2024)}]{holder2024haal}
Holder, J.; Kraisler, S.; and Mesbahi, M. 2024.
\newblock Centralized and Distributed Strategies for Handover-Aware Task Allocation in Satellite Constellations.
\newblock \emph{Journal of Guidance Control and Dynamics (Manuscript under review)}.

\bibitem[{Hwang, Chen, and Lee(2007)}]{hwang2007soccer}
Hwang, K.-S.; Chen, Y.-J.; and Lee, C.-H. 2007.
\newblock Reinforcement Learning in Strategy Selection for a Coordinated Multirobot System.
\newblock \emph{IEEE Transactions on Systems, Man, and Cybernetics - Part A: Systems and Humans}, 37(6): 1151--1157.

\bibitem[{Kuhn(1955)}]{kuhn1955hungarian}
Kuhn, H.~W. 1955.
\newblock The Hungarian method for the assignment problem.
\newblock \emph{Naval Research Logistics Quarterly}, 2(1-2): 83--97.

\bibitem[{Lin et~al.(2022)Lin, Ni, Kuang, Jiang, and Huang}]{lin2022dynamicbeam}
Lin, Z.; Ni, Z.; Kuang, L.; Jiang, C.; and Huang, Z. 2022.
\newblock Dynamic Beam Pattern and Bandwidth Allocation Based on Multi-Agent Deep Reinforcement Learning for Beam Hopping Satellite Systems.
\newblock \emph{IEEE Transactions on Vehicular Technology}, 71: 3917--3930.

\bibitem[{Matignon, Laurent, and Le~Fort-Piat(2012)}]{matignon2012independent}
Matignon, L.; Laurent, G.~J.; and Le~Fort-Piat, N. 2012.
\newblock Independent reinforcement learners in cooperative markov games: a survey regarding coordination problems.
\newblock \emph{The Knowledge Engineering Review}, 27(1): 1--31.

\bibitem[{Mnih et~al.(2013)Mnih, Kavukcuoglu, Silver, Graves, Antonoglou, Wierstra, and Riedmiller}]{mnih2013playing}
Mnih, V.; Kavukcuoglu, K.; Silver, D.; Graves, A.; Antonoglou, I.; Wierstra, D.; and Riedmiller, M. 2013.
\newblock Playing atari with deep reinforcement learning.
\newblock \emph{arXiv preprint arXiv:1312.5602}.

\bibitem[{Pachler de~la Osa et~al.(2021)Pachler de~la Osa, Guerster, del Portillo~Barrios, Crawley, and Cameron}]{pachler2021static}
Pachler de~la Osa, N.; Guerster, M.; del Portillo~Barrios, I.; Crawley, E.; and Cameron, B. 2021.
\newblock Static beam placement and frequency plan algorithms for LEO constellations.
\newblock \emph{International Journal of Satellite Communications and Networking}, 39(1): 65--77.

\bibitem[{Papoudakis et~al.(2020)Papoudakis, Christianos, Sch{\"a}fer, and Albrecht}]{papoudakis2020benchmarking}
Papoudakis, G.; Christianos, F.; Sch{\"a}fer, L.; and Albrecht, S.~V. 2020.
\newblock Benchmarking multi-agent deep reinforcement learning algorithms in cooperative tasks.
\newblock \emph{arXiv preprint arXiv:2006.07869}.

\bibitem[{Qin, Zhu, and Ye(2022)}]{qin2022reinforcement}
Qin, Z.; Zhu, H.; and Ye, J. 2022.
\newblock Reinforcement {Learning} for {Ridesharing}: {An} {Extended} {Survey}.
\newblock \emph{Transportation Research Part C: Emerging Technologies}, 144.

\bibitem[{Rashid et~al.(2020)Rashid, Samvelyan, De~Witt, Farquhar, Foerster, and Whiteson}]{rashid2020monotonic}
Rashid, T.; Samvelyan, M.; De~Witt, C.~S.; Farquhar, G.; Foerster, J.; and Whiteson, S. 2020.
\newblock Monotonic value function factorisation for deep multi-agent reinforcement learning.
\newblock \emph{Journal of Machine Learning Research}, 21(178): 1--51.

\bibitem[{Shah, Lowalekar, and Varakantham(2020)}]{shah2020neural}
Shah, S.; Lowalekar, M.; and Varakantham, P. 2020.
\newblock Neural Approximate Dynamic Programming for On-Demand Ride-Pooling.
\newblock \emph{Proceedings of the AAAI Conference on Artificial Intelligence}, 34(01): 507--515.

\bibitem[{Singh et~al.(2000)Singh, Jaakkola, Littman, and Szepesv{\'a}ri}]{singh2000convergence}
Singh, S.; Jaakkola, T.; Littman, M.~L.; and Szepesv{\'a}ri, C. 2000.
\newblock Convergence results for single-step on-policy reinforcement-learning algorithms.
\newblock \emph{Machine learning}, 38: 287--308.

\bibitem[{Sunehag et~al.(2017)Sunehag, Lever, Gruslys, Czarnecki, Zambaldi, Jaderberg, Lanctot, Sonnerat, Leibo, Tuyls et~al.}]{sunehag2017value}
Sunehag, P.; Lever, G.; Gruslys, A.; Czarnecki, W.~M.; Zambaldi, V.; Jaderberg, M.; Lanctot, M.; Sonnerat, N.; Leibo, J.~Z.; Tuyls, K.; et~al. 2017.
\newblock Value-decomposition networks for cooperative multi-agent learning.
\newblock \emph{arXiv preprint arXiv:1706.05296}.

\bibitem[{Tampuu et~al.(2017)Tampuu, Matiisen, Kodelja, Kuzovkin, Korjus, Aru, Aru, and Vicente}]{tampuu2017multiagent}
Tampuu, A.; Matiisen, T.; Kodelja, D.; Kuzovkin, I.; Korjus, K.; Aru, J.; Aru, J.; and Vicente, R. 2017.
\newblock Multiagent cooperation and competition with deep reinforcement learning.
\newblock \emph{PloS one}, 12(4): e0172395.

\bibitem[{Wang et~al.(2021)Wang, Wu, Xing, and Pedrycz}]{9115252}
Wang, X.; Wu, G.; Xing, L.; and Pedrycz, W. 2021.
\newblock Agile Earth Observation Satellite Scheduling Over 20 Years: Formulations, Methods, and Future Directions.
\newblock \emph{IEEE Systems Journal}, 15(3): 3881--3892.

\bibitem[{Yu et~al.(2022)Yu, Velu, Vinitsky, Gao, Wang, Bayen, and Wu}]{yu2022surprising}
Yu, C.; Velu, A.; Vinitsky, E.; Gao, J.; Wang, Y.; Bayen, A.; and Wu, Y. 2022.
\newblock The surprising effectiveness of PPO in cooperative multi-agent games.
\newblock \emph{Advances in Neural Information Processing Systems}, 35: 24611--24624.

\bibitem[{Zavlanos, Spesivtsev, and Pappas(2008)}]{zavlanos2008distributed}
Zavlanos, M.~M.; Spesivtsev, L.; and Pappas, G.~J. 2008.
\newblock A distributed auction algorithm for the assignment problem.
\newblock In \emph{2008 47th IEEE Conference on Decision and Control}, 1212--1217. IEEE.

\end{thebibliography}

\section{Appendix}

\subsection{Proofs}
\decompthm*
\begin{proof}
    Recall the standard definition of $Q^\pialpha(s, x)$:
    \begin{align*}
        \begin{split}
            Q^\pialpha(s,x) := \bbE^\pialpha \bigg[& r(s_k,x_k) + \sum_{t=k+1}^T \gamma^{t-k} r(s_t, x_t) \\
    &\ \bigg| \ s_k=s, x_k=x, s_{k+1}\sim\calT(s_k, x_k)\bigg]
        \end{split}
    \end{align*}

    In the assignment problem setting where $r(s,x)=\sum_{i=1}^n r_i(s,x^i)$, we have:
    \begin{align*}
        \begin{split}
            Q^\pialpha(\cdot) = \bbE^\pialpha\bigg[ & \sum_{i=1}^n r_i(s_k,x_k^i) + \sum_{t=k+1}^T \sum_{i=1}^n \gamma^{t-k} r_i(s_t, x_t^i) \\ & \ \bigg| \ s_k=s, x_k=x, s_{k+1}\sim\calT(s_k, x_k) \bigg]
        \end{split}
    \end{align*}
    
    \begin{align*}
        \begin{split}
            Q^\pialpha(\cdot) = \sum_{i=1}^n \bbE^\pialpha\bigg[& r_i(s_k,x_k^i) + \sum_{t=k+1}^T \gamma^{t-k} r_i(s_t, x_t^i)\\ & \bigg| \ s_k=s, x_k=x, s_{k+1}\sim\calT(s_k, x_k) \bigg]
        \end{split}
    \end{align*}

    When $x=\pialpha(s)$, then each term inside the summation is clearly equivalent to $Q_i^\pialpha$, and we have:
    \begin{equation*}
        Q^\pialpha(s,x)=\sum_{i=1}^n Q_i^\pialpha(s,x^i) \quad \text{for } x = \pialpha(s).
    \end{equation*}

    Finally, noting that $j\neq x^i \implies x_{ij}=0$ (an agent can only be assigned to a single task), we can say:
    \begin{equation*}
        Q^\pialpha(s,x) = \sumij Q^\pialpha_i(s, j)x_{ij} \quad \text{for } x = \pialpha(s).
    \end{equation*}
    and the proof is complete.
\end{proof}

\contractlemma*
\begin{proof}
Starting from an arbitrary $Q_i, \tilde{Q}_i \in \calQ$, and with $||Q||=\underset{o,j}{\sup} \ Q(o,j)$, we have:
\begin{align*}
\begin{split}
    ||FQ_i-&F\tilde{Q}_i|| \\ &= \underset{o,j}{\sup} \ \gamma \bigg|\bbE^\pi \bigg[ Q(o_{k+1}^i, x_{k+1}^i) - \tilde{Q}(o_{k+1}^i, x_{k+1}^i) \bigg] \bigg|.
\end{split}
\end{align*}

Now, $o$ and $j$ only affect the reward in that they influence $o_{k+1}^i$ and $x_{k+1}$, so we can write:
\begin{equation*}
\begin{split}
    ||FQ_i-&F\tilde{Q}_i|| \\ &\leq \underset{o_{k+1}^i,x_{k+1}^i}{\sup} \ \gamma | Q(o_{k+1}^i, x_{k+1}^i) - \tilde{Q}(o_{k+1}^i, x_{k+1}^i) | \\ &= \gamma ||Q_i-\tilde{Q}_i||.
\end{split}
\end{equation*}

This proves the operator $F$ is a $\gamma$-contraction on $Q\in\calQ$, and thus has a unique fixed point. 

Moreover, its unique fixed point is $Q_i^\pi$. Starting from the operator $F$, we have:
\begin{equation*}
    (FQ_i^\pi)(o_k^i,j) = \bbE^\pi \bigg[ r_i(s_k,j) + \gamma Q_i^\pi(o_{k+1}^i, x_{k+1}^i) \bigg] 
\end{equation*}

We abuse notation by writing $s_k \sim o_k^i$ to mean that $s_k$ is a randomly sampled state consistent with $o_k^i$. Then, substituting the definition of $Q_i^\pi$ and expanding $\bbE^\pi$,
\begin{equation*}
\begin{split}
    (FQ_i^\pi)(o_k^i,j) = \bbE \bigg[ &r_i(s_k,j) + \sum_{t=k+1}^T \gamma^{t-k} r_i(s_t, x_t^i) \\ &\ \bigg| \ s_k\sim o_k^i, s_{k+1}\sim\calT(s_k, \pi(s_k));\pi \bigg] 
\end{split}
\end{equation*}

This is now the definition of $Q_i^\pi$, so $FQ_i^\pi=Q_i^\pi$ and per Banach fixed-point theorem, if $Q^{n+1}_i=FQ^{n}_i$, then $\lim_{n\to\infty} Q^{n+1}_i \to Q_i^\pi$.
\end{proof}

\subsection{Constellation Experiment Details}\label{appendix:constellation_details}
This appendix details the specifics of the satellite constellation setup.

\subsubsection{Satellite arrangement and baseline task benefit calculation}
The satellite constellation consists of $18$ evenly spaced, circular orbital planes with $18$ satellites each for a total of $324$ satellites (altitude of $550$ km, inclination $58^\circ$). Orbital mechanics simulations are conducted with the poliastro package.

The $450$ tasks are randomly placed on the surface of Earth between $\pm 70^\circ$ latitude. A task $j$ has a randomly selected priority $P_j = \{1, 1, 1, 5\}$. Tasks are then assigned baseline benefits according to:
\begin{equation*}
    [\beta_k]_{ij} := 
    \begin{cases}
    P_j\exp\left(-\frac{[\theta_k]_{ij}^2}{2\sigma^2}\right) & \text{if }[\theta_k]_{ij} \leq \theta_{\text{FOV}} \\
    0 & \text{otherwise}
    \end{cases}
\end{equation*}
where $[\theta_k]_{ij} \in [0^\circ,360^\circ)$ is the angle between satellite $i$ and task $j$ at time step $k$, and $\theta_{\text{FOV}}:=60^\circ$ is the maximum angle at which a satellite can observe a task. This baseline benefit is later adjusted based on the state of the system $s$.

\subsubsection{Calculation of state-dependent benefits $\hat{\beta}(s_k)$}
The full state of the system $s_k$ contains the following information:
\begin{itemize}
    \item Baseline benefits of satellite task assignments $[\beta_k]_{ij} \in \bbR$ for all $i,j$, as well as baseline benefits from the following $2$ time steps $[\beta_{k+1}]_{ij}$ and $[\beta_{k+2}]_{ij}$.
    \item Previous satellite assignments $x_{k-1} \in X$.
    \item Satellite power states $p^i_k\in\bbR$, $p^i_0=1$, for all $i$.
\end{itemize}

The state dependent benefits are calculated as follows:
\[[\hat{\beta}(s_k)]_{ij} := \begin{cases}
    [\beta_k]_{ij} & \text{if } [x_{k-1}]_{ij}=1 \text{ and } p_k^i > 0\\
    [\beta_k]_{ij} - 0.5 & \text{if } [x_{k-1}]_{ij}=0 \text{ and } [\beta_k]_{ij} > 0\\
    &\text{ and } p_k^i > 0\\
    0 & \text{otherwise } (\text{i.e. }p_k^i \leq 0)
\end{cases}\]

where case $1$ corresponds to an assignment being consistent across time steps, case $2$ corresponds to switching to a meaningful task (i.e. a task that the satellite can see), and case $3$ corresponds to the case when the satellite is out of power.

\subsubsection{Observation and ``action'' spaces}
Because a given satellite only has a small portion of the $450$ tasks in view at a time, the majority of satellite-task assignments will yield no benefit and are thus interchangeable. 

Exploiting this, we can greatly restrict the observation space to the $10$ tasks with the highest state dependent benefits over the next $3$ time steps, which we denote $\calM_k^i$:
\begin{equation*}
\begin{split}
    \calM^i_k := \{j \ : \ &\sum_{t=0}^{L-1}[\beta_{k+t}]_{ij} \\ &\text{is one of the }10 \ \text{largest of any } j=1,...,m\}.
\end{split}
\end{equation*}

In a constellation with our specifications, all tasks $j\notin \calM_k^i$ are likely not visible for satellite $i$ and thus do not need to be included in the observation.
 
Similarly, we define the $10$ satellites with the highest benefits for one of the tasks in $\calM_k^i$ over the next three time steps as the neighbors $\calN^i_k$ of satellite $i$:
\begin{equation*}
\begin{split}
    \calN^i_k := \{l \ : \ &\underset{j \in \calM_k^i}{\max} \ \sum_{t=0}^{L-1}[\beta_{k+t}]_{lj} \\ &\text{ is one of the }10 \text{ largest of any } l=1,...,n\}.
\end{split}
\end{equation*}

These are satellites which are physically nearby satellite $i$, and thus it is plausible that satellite $i$ has knowledge of them.

For a satellite $i$, the observation space is then:
\begin{itemize}
    \item The baseline benefits for the best $10$ tasks over the next three time steps for itself and all neighbors, i.e. $[\beta_{k}]_{lj}$, $[\beta_{k+1}]_{lj}$, $[\beta_{k+2}]_{lj}$ for all $j \in \calM_k^i$, $l\in \{\calN_k^i \cap i\}$.
    \item The power states of itself and all neighboring satellites, i.e. $p^l_k$ for all $l\in \{\calN_k^i \cap i\}$.
    \item The previous assignments of itself and all neighboring satellites, i.e. $x_{k-1}^l$ for all $l\in \{N_k^i \cap i\}$.
\end{itemize}
This observation space provides agents with enough information to reason about which tasks competing satellites might want to bid for and to avoid penalties for running out of power or changing assignments.

Given that agents only observe information related to the top 10 benefit tasks $j\in \calM_k^i$, we also define our agent $Q$-functions only in terms of potential assignments to the top 10 tasks, as all other tasks are yield zero benefit and are therefore interchangeable.

The ``action space'' is then $11$-dimensional, with action $1$ corresponding to the task in $\calM_k^i$ with the highest benefit, action $2$ the second highest, and so on. The interchangeable actions $j\notin\calM_k^i$ are uniformly evaluated by $Q_i^\pi(o_k^i, 11)$. Notably, these tasks are almost always out of view ($[\beta_{k}]_{ij}=0$) and thus can change freely without incurring penalties for switching assignments.

The interpretation for a satellite completing a task $j\notin\calM_k^i$ is thus that the satellite chooses to forgo benefit at that time step to instead charge its battery.

\section{Hyperparameters and Network Details}\label{appendix:hyperparams}
Code for experiments, as well as network architecture and hyperparameters are adapted from the EPyMARL framework (https://github.com/uoe-agents/epymarl) \cite{papoudakis2020benchmarking}.

Action and critic networks are simple MLPs containing two hidden layers with $64$ parameters each, and use ReLU activation functions. Networks are optimized using Adam. Hyperparameters for each algorithm are provided in Table \ref{table:hyperparams}.
\begin{table}[h]
  \caption{Hyperparameters for algorithms in satellite environment.}
  \label{table:hyperparams}
  \centering
  \begin{tabular}{llll}
    \toprule
    \cmidrule(r){1-4}
    \textbf{Hyperparameter} & \textbf{REDA} & \textbf{IQL} & \textbf{IPPO} \\
    \midrule
    Discount rate $\gamma$ & 0.99 & 0.99 & 0.99 \\
    Learning rate & $0.0005$ & $0.0005$ & $0.0003$    \\
    Batch size & 5 & 5 & 6 \\
    Buffer size & 1000 & 1000 & 6 \\
    Entropy coefficient & - & - & 0.01 \\
    Target update $\tau$ & 0.01 & 0.01 & 0.01 \\
    \bottomrule
  \end{tabular}
\end{table}

The constellation environment was memory intensive, requiring $\sim300$ Gb of storage to store the replay buffer for REDA and IQL. Computation required to solve linear assignment problems was trivial as a component of the total compute, but does grow with problem size.

Experiments were run on a Macbook Pro with a M3 chip and GPU in $\sim12$ hours per run, or on a computing cluster with a single L40 Nvidia GPU for $\sim24$ hours per run.

\end{document}